\documentclass[12pt]{article}
\usepackage{amsfonts,amsmath,amssymb,amsthm,mathrsfs,url,hyperref}

\title{Uniform Probability Distribution Over All Density Matrices}
\author{
Eddy Keming Chen\footnote{Department of Philosophy, University of California San Diego, 9500 Gilman Dr, La Jolla, CA 92093, USA.  Email: eddykemingchen@ucsd.edu}~~and
Roderich Tumulka\footnote{Fachbereich Mathematik,
     Eberhard-Karls-Universit\"at, Auf der Morgenstelle 10, 72076
     T\"ubingen, Germany. Email: roderich.tumulka@uni-tuebingen.de}
}
\date{April 2, 2020}

\addtolength{\textwidth}{2.0cm}
\addtolength{\hoffset}{-1.0cm}
\addtolength{\textheight}{2.4cm}
\addtolength{\voffset}{-1.5cm}

\newcommand{\Hilbert}{\mathscr{H}}
\newcommand{\Kilbert}{\mathscr{K}}

\newcommand{\sD}{\mathscr{D}}

\newcommand{\sP}{\mathscr{P}}
\newcommand{\sS}{\mathscr{S}}
\newcommand{\sT}{\mathscr{T}}
\newcommand{\sU}{\mathscr{U}}
\newcommand{\sV}{\mathscr{V}}
\newcommand{\sW}{\mathscr{W}}

\newcommand{\EEE}{\mathbb{E}}

\newcommand{\RRR}{\mathbb{R}}

\newcommand{\SSS}{\mathbb{S}}
\newcommand{\NNN}{\mathbb{N}}
\newcommand{\scp}[2]{\langle #1|#2 \rangle}

\DeclareMathOperator{\tr}{tr}

\newcommand{\vc}{\boldsymbol{c}}

\newcommand{\vw}{\boldsymbol{w}}

\newcommand{\vmu}{\boldsymbol{\mu}}
\newcommand{\vnu}{\boldsymbol{\nu}}

\theoremstyle{plain}
\newtheorem{prop}{Proposition}
\newcommand{\be}{\begin{equation}}
\newcommand{\ee}{\end{equation}}
\newcommand{\dm}{\rho}

\begin{document}
\maketitle
\begin{abstract}
Let $\Hilbert$ be a finite-dimensional complex Hilbert space and $\sD$ the set of density matrices on $\Hilbert$, i.e., the positive operators with trace 1. Our goal in this note is to identify a probability measure $u$ on $\sD$ that can be regarded as the uniform distribution over $\sD$. We propose a measure on $\sD$, argue that it can be so regarded, discuss its properties, and compute the joint distribution of the eigenvalues of a random density matrix distributed according to this measure.

\medskip

  \noindent 
  Key words: random matrix; finite-dimensional Hilbert space.
\end{abstract}

\section{Introduction}

With every probability distribution $\mu$ over wave functions, i.e., over the unit sphere $\SSS(\Hilbert)$ in a complex Hilbert space $\Hilbert$, there is associated a density matrix
\be
\dm = \int_{\SSS(\Hilbert)}\mu(d\psi) \, |\psi\rangle\langle\psi|\,.
\ee 
In this note, in contrast, we consider a probability distribution over density matrices, and we ask whether there exists a distribution that should be regarded as the \emph{uniform} distribution $u$ over all density matrices. Our considerations involve certain applications of random matrix theory.

Here is our motivation for the question. Density matrices can arise not only as encoding random $\psi$'s, but also as partial traces of states of larger systems; moreover, it is conceivable that even the fundamental state, as known to nature, is a density matrix $\dm$. For example, it is easy to set up a version of Bohmian mechanics in which the particles are guided, not by a wave function $\psi$, but by a density matrix $\dm$ \cite{DGTZ05}.  The density matrix, in such a theory, is not an expression of our ignorance of the actual pure state even if we continue to call it a ``mixed state,'' nor is it an expression of entanglement with another system, but it is a fundamental object---a physical variable on which the motion of the Bohmian particles depends. 
Likewise, a density matrix can be a fundamental object in collapse theories or many-worlds theories \cite{AGTZ13}. But if a density matrix is a fundamental object in nature, then it makes sense to consider  a random density matrix. For example, when considering the initial state of the universe, it is common to consider a random state in a particular subspace $\Hilbert_{PH}$ of the Hilbert space of the universe associated with very low entropy; the statement that the initial state of the universe lies in the subspace $\Hilbert_{PH}$ is often called the \emph{past hypothesis} (PH) \cite{Alb}. Here, one usually has in mind a random pure state $\psi$ in $\Hilbert_{PH}$, but if states $\dm$ that are fundamentally mixed are possible, as illustrated by the above-mentioned versions of Bohmian mechanics, collapse theories, and many-worlds theories, we can also consider a random $\dm$ in $\Hilbert_{PH}$ \cite{Ch19}. Since one considers for $\psi$ the uniform distribution over $\SSS(\Hilbert_{PH})$, the analog would involve the uniform distribution $u$ over all $\rho$ concentrated in $\Hilbert_{PH}$, which brings us to the question whether such a distribution $u$ exists, whether it is uniquely defined, and what it looks like. 

Here, we propose a natural definition of $u$ on any Hilbert space $\Hilbert$ of finite dimension $d\in\NNN$. It will be clear from the definition that $u$ exists and is unique. For infinite-dimensional Hilbert spaces $\Hilbert$, it does not seem that a uniform distribution exists over the density matrices on $\Hilbert$, which is not surprising as there is no uniform distribution either over $\Hilbert$ itself or $\SSS(\Hilbert)$. Of course, our reasoning also yields, for any subspace $\Hilbert$ of a bigger Hilbert space $\Kilbert$ with $\dim\Hilbert<\infty$, a uniform probability distribution over the density matrices concentrated on $\Hilbert$, regardless of whether $\dim\Kilbert$ is finite or infinite. We show that $u$ is invariant under unitary operators on $\Hilbert$, so that a $u$-distributed $\rho$ has an eigenbasis that is uniformly distributed in the set of all orthonormal bases of $\Hilbert$. Furthermore, we compute the joint distribution of the eigenvalues of $\rho$. The expectation value of $\rho$ is $d^{-1}I$, where $I$ is the identity operator on $\Hilbert$.

In applications, the normalized measure $u$ may often play the role of a \emph{typicality measure} (see, e.g., \cite[Sec.~6]{GLTZ10} and \cite[Sec.~7.1]{GLTZ20}) rather than that of direct probability. That is, it may serve for defining what is true of \emph{most} density matrices (that are, say, concentrated in a certain subspace such as $\Hilbert_{PH}$). For example, the properties of $u$ will entail that for a bipartite system, most density matrices are entangled, just as most pure states are \cite{GLTZ06}.

Concerning the past hypothesis, another approach proposes to take the initial density matrix of the universe to be the normalized projection onto $\Hilbert_{PH}$ \cite{Ch18,Ch20}. So, one could consider different kinds of initial conditions: a random $\psi$ with uniform distribution over $\SSS(\Hilbert_{PH})$, a fixed density matrix proportional to the projection to $\Hilbert_{PH}$, or a random density matrix with distribution $u$ over the density matrices in $\Hilbert_{PH}$. It seems reasonable to expect that all three theories are empirically equivalent, according to the appropriate sense of typicality.  We leave that issue to another paper.

\section{Definition of the Measure}

Let $\sS$ be the space of self-adjoint operators on $\Hilbert$ (a real vector space of dimension $d^2$), $\sP\subset\sS$ the set of positive operators on $\Hilbert$, and $\sT_c$ the set of self-adjoint operators with trace $c$ (an affine subspace of $\sS$ of dimension $d^2-1$); the set $\sD$ of all density matrices is $\sD=\sP\cap \sT_1$. Since for $d=1$, $\sD$ has only one element, we assume $d\geq 2$. Let $\sP^\circ$ denote the interior of $\sP$, which is the set of positive definite operators on $\Hilbert$, and $\sD^\circ=\sP^\circ \cap \sT_1$ the interior of $\sD$ in $\sT_1$ (the set of density matrices for which 0 is not an eigenvalue).

As in every affine space of finite dimension, there is a natural notion of volume in $\sT_1$: a nonzero translation-invariant measure on the Borel $\sigma$-algebra of $\sT_1$. It is well known that this measure is unique up to a global positive factor.

\begin{prop}
For every such measure, the volume of $\sD$ is neither zero nor infinite.
\end{prop}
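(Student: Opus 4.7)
The plan is to establish the two inequalities $\mathrm{vol}(\sD)>0$ and $\mathrm{vol}(\sD)<\infty$ separately. Since any translation-invariant Borel measure on the $(d^2-1)$-dimensional affine space $\sT_1$ is, after choosing an origin (say $I/d$) and an affine identification $\sT_1\cong \RRR^{d^2-1}$, a positive scalar multiple of Lebesgue measure, it suffices to prove both bounds for the Lebesgue volume; the uniqueness of translation-invariant measure up to a factor (already quoted in the text) then transfers the conclusion to every choice of measure.

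For finiteness I would argue that $\sD$ is compact in $\sT_1$. It is closed in $\sS$, hence in $\sT_1$, because $\sD$ is cut out by the closed conditions $A^*=A$, $\tr A=1$, and $A\geq 0$. For boundedness, every $\dm\in\sD$ has spectrum in $[0,1]$, so $\|\dm\|_{\mathrm{op}}\leq 1$ and therefore $\|\dm\|_{HS}^2=\tr(\dm^2)\leq\tr(\dm)=1$. Since all norms on the finite-dimensional real vector space $\sS$ are equivalent, $\sD$ is bounded in the affine coordinates on $\sT_1$, hence compact, hence of finite Lebesgue measure.

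For positivity I would exhibit a relatively open subset of $\sT_1$ contained in $\sD$. The maximally mixed state $I/d$ lies in $\sP^\circ$, since its smallest eigenvalue is $1/d>0$. Because $\lambda_{\min}\colon\sS\to\RRR$ is continuous, there is an open neighborhood $U$ of $I/d$ in $\sS$ on which $\lambda_{\min}>0$, i.e.\ $U\subset \sP^\circ$. The intersection $U\cap\sT_1$ is a relatively open neighborhood of $I/d$ in $\sT_1$ contained in $\sD^\circ\subset\sD$; being a nonempty open subset of the $(d^2-1)$-dimensional affine space, it has strictly positive Lebesgue measure.

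The only genuinely delicate point is keeping track of the right dimension: the ambient volume in $\sS$ is $d^2$-dimensional, whereas the volume we want is the $(d^2-1)$-dimensional affine volume inside $\sT_1$, and one must verify that ``interior'' and ``open ball'' are taken relative to $\sT_1$ rather than $\sS$ (in the latter sense $\sD$ would have empty interior and zero volume). Once the affine parametrization is fixed, the argument is a routine combination of continuity of eigenvalues, closedness under taking intersections, equivalence of norms in finite dimension, and the standard facts that compact sets have finite, and nonempty open sets have positive, Lebesgue measure in $\RRR^{d^2-1}$.
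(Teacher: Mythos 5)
Your proof is correct, and it reaches the same two pillars as the paper's (non-emptiness of the relative interior for positivity, compactness for finiteness), but it gets to compactness by a genuinely different route. The paper exhibits $\sD$ as the continuous image of the compact set $\Lambda\times U(d)$ under the spectral-decomposition map $\varphi$ of \eqref{phidef}; you instead verify closedness directly (the conditions $A^*=A$, $\tr A=1$, $\langle\psi,A\psi\rangle\geq 0$ for all $\psi$ are each closed) and boundedness via $\|\dm\|_{HS}^2=\tr(\dm^2)=\sum_i\lambda_i^2\leq\sum_i\lambda_i=1$, then invoke Heine--Borel. Your argument is more elementary and self-contained; the paper's has the side benefit of introducing $\varphi$ and $\Lambda$, which are reused in the proof of the eigenvalue distribution, and of avoiding even the (easy) closedness check. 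On positivity you are actually more careful than the paper, which simply asserts that $\sD^\circ=\sP^\circ\cap\sT_1$ is open and non-empty; your explicit neighborhood of $I/d$ obtained from continuity of $\lambda_{\min}$ supplies the justification, and your closing remark about taking the interior relative to $\sT_1$ rather than $\sS$ is exactly the point that makes the paper's one-line assertion legitimate. No gaps.
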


\begin{proof}
It is not zero because the interior $\sD^\circ$ is open and non-empty. That it is finite will follow once we show that $\sD$ is compact and therefore bounded in $\sT_1$. The compactness of $\sD$ will follow from the fact that the continuous image of any compact set is compact. Here, the relevant mapping is $\varphi: \RRR^d\times U(d) \to \sS$ (where $U(d)$ denotes the unitary group of $d\times d$ matrices, here regarded as orthonormal bases of $\Hilbert$) defined by
\be\label{phidef}
\varphi(\lambda_1,\ldots,\lambda_d,\psi_1,\ldots,\psi_d) = \sum_{i=1}^d \lambda_i \psi_i\,.
\ee
$\varphi$ is clearly continuous, and since $U(d)$ is known to be compact and 
\be\label{Lambdadef}
\Lambda:=\Bigl\{(\lambda_1,\ldots,\lambda_d)\in[0,1]^d:
\lambda_1\geq \ldots \geq \lambda_d\,,~\sum_{i=1}^d \lambda_i=1 \Bigr\}
\ee
is clearly compact, also $\Lambda\times U(d)$ is compact,
which gets mapped to $\sD$.
\end{proof}

Thus, one can restrict the volume measure in $\sT_1$ to $\sD$ and normalize, which removes the arbitrary constant. The resulting measure is the desired measure $u$.

\section{Properties of the Measure}

\subsection{Unitary Invariance}

\begin{prop}\label{prop:unitary}
$u$ is invariant under unitary transformations $U$ of $\Hilbert$. 
\end{prop}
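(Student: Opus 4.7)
The plan is to show that the map $\Phi_U : \sS \to \sS$ given by $\Phi_U(A) = UAU^*$ is an affine bijection of $\sT_1$ that preserves $\sD$ and preserves the translation-invariant Lebesgue measure on $\sT_1$; since $u$ is the normalized restriction of that measure to $\sD$, this will yield the invariance of $u$.

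First I would collect the elementary facts. The map $\Phi_U$ is real-linear on $\sS$ and preserves self-adjointness, trace, and positivity (the last from $\langle \psi|UAU^*|\psi\rangle = \langle U^*\psi|A|U^*\psi\rangle \geq 0$), so $\Phi_U(\sT_1)=\sT_1$ and $\Phi_U(\sD)=\sD$. Moreover, $\Phi_U$ fixes the point $I/d \in \sT_1$, so under the canonical identification of $\sT_1$ with the traceless self-adjoint operators $\sT_0$ via translation by $I/d$, the affine map $\Phi_U|_{\sT_1}$ corresponds to the linear restriction $\Phi_U|_{\sT_0} : \sT_0 \to \sT_0$.

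The main step is to verify that this linear map preserves Lebesgue measure on $\sT_0$. For this I would equip $\sS$ with the Hilbert--Schmidt inner product $\langle A, B\rangle_{HS} = \tr(A^*B)$; cyclicity of the trace gives $\langle UAU^*, UBU^*\rangle_{HS} = \tr(UA^*U^*UBU^*) = \tr(A^*B) = \langle A,B\rangle_{HS}$, so $\Phi_U$ is an orthogonal transformation of $(\sS, \langle\cdot,\cdot\rangle_{HS})$, in particular of its invariant subspace $\sT_0$. An orthogonal map on a finite-dimensional real inner product space has $|\det|=1$ and hence preserves Lebesgue measure; this transfers, via the fixed point $I/d$, to preservation of the (unique up to constant) translation-invariant measure on $\sT_1$.

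Combining the steps, $\Phi_U$ bijects $\sD$ with itself and preserves the restricted Lebesgue measure on $\sT_1$, so after normalization $u$ is $\Phi_U$-invariant. I do not foresee any serious obstacle; the only point requiring mild care is correctly relating the Lebesgue measure on the affine space $\sT_1$ to the Lebesgue measure on its translation space $\sT_0$, which is exactly what the fixed point $I/d$ enables.
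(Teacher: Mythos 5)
Your proof is correct and is essentially the paper's own ``alternative proof'': you establish that $A\mapsto UAU^*$ is a Hilbert--Schmidt isometry fixing $I/d$, hence an orthogonal map of $\sT_0$ with $|\det|=1$, and you transfer this to the translation-invariant measure on $\sT_1$. The paper's primary proof reaches the same conclusion without invoking any inner product: since the map is an affine bijection of $\sT_1$, it sends the (essentially unique) translation-invariant measure to a positive multiple of itself, and because it maps $\sD$ --- which has finite, nonzero volume by Proposition 1 --- onto itself, that multiple must equal $1$; your metric argument is slightly less economical but equally valid and a bit more self-contained, since it does not rely on the finiteness result.
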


\begin{proof}
$U$ maps $\sS$ to itself in a linear way and maps $\sT_1$ to itself in an affine-linear way. Thus, any translation invariant measure on $\sT_1$ will be mapped by $U$ to a multiple of itself. Since $U$ also maps $\sP$ to itself, it also maps $\sD$ to itself. As a consequence, it must preserve volumes when acting on $\sT_1$, and so it preserves $u$.
\end{proof}

\begin{proof}[Alternative proof.]
Equip $\sS$ with the Hilbert-Schmidt inner product 
\be\label{HS}
\langle A,B \rangle = \tr(AB), 
\ee
which is invariant under $U$. Using the inner product, one has a notion of area on every surface, in particular on $\sT_1$. $u$ is just the normalized surface area restricted to $\sD$, and it follows that surface area is invariant under $U$.
\end{proof}

Note that unitary invariance does not uniquely select the measure $u$. Unitary invariance means that the joint distribution of the eigenvectors of $\rho$ is uniform while saying nothing about the joint distribution of the eigenvalues. The property that selects $u$ as the natural normalized measure on $\sD$ is that $u$ is, when looked at in the right way, just volume.

\subsection{Expectation and Covariance}

The \emph{covariance} of a random vector $V$ in a real vector space $\sV$ with inner product $\langle~,\,\rangle$ is defined to be the operator $C:\sV\to\sV$ such that
\be
\langle v,Cv'\rangle = \EEE \Bigl[\bigl\langle v,(V-\EEE V)\bigr\rangle \bigl\langle(V-\EEE V),v'\bigr\rangle \Bigr] 
\ee
for all $v,v'\in\sV$.

\begin{prop}
A $u$-distributed $\rho$ has expectation
\be\label{Erho}
\EEE\rho = \tfrac{1}{d}I
\ee 
and covariance (in $\sV=\sS$ with Hilbert-Schmidt inner product \eqref{HS}) 
\be
C= c(d)\,P_{\sT_0}\,,
\ee
with $c(d)>0$ some constant\footnote{After completion of this paper, we have become aware of results of Tucci \cite{Tu02} that imply that $c(d)=\frac{1}{d(d^2+1)}$.} and $P_{\sT_0}$ the projection to the set $\sT_0$ of traceless operators in $\sS$.
\end{prop}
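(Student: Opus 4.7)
The plan is to exploit unitary invariance of $u$ (Proposition~\ref{prop:unitary}) together with Schur's lemma applied to the conjugation action $\Phi_U: A \mapsto UAU^*$ of $U(d)$ on $\sS$. For the expectation, since $\sD$ is compact, $\EEE\rho \in \sT_1$ exists. Unitary invariance gives $U(\EEE\rho)U^* = \EEE(U\rho U^*) = \EEE\rho$ for every $U\in U(d)$, so $\EEE\rho$ commutes with every unitary and must be a scalar multiple of $I$; the constraint $\tr(\EEE\rho)=1$ then forces $\EEE\rho = I/d$.

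For the covariance, I would first set $X = \rho - I/d$. Since $X$ is traceless, $\langle I, X\rangle = 0$ almost surely, so the defining formula of $C$ yields $CI = 0$ and $\langle I, Cv'\rangle = 0$ for every $v'\in\sS$; thus $C$ annihilates $\RRR I$, has range in $\sT_0$, and satisfies $C = P_{\sT_0} C P_{\sT_0}$. Next, $\Phi_U$ is an HS-isometry of $\sS$ preserving both $\RRR I$ and $\sT_0$, and unitary invariance of $u$ ensures that $\Phi_U(X)$ has the same law as $X$; substituting into the defining formula of $C$ gives $\Phi_U C = C\Phi_U$ on all of $\sS$. The conjugation action of $U(d)$ on $\sT_0$ is irreducible over $\RRR$ (up to the isomorphism $A\mapsto iA$ it is the adjoint representation of the compact simple group $SU(d)$ on $\mathfrak{su}(d)$). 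Since $C|_{\sT_0}$ is self-adjoint (covariances are) and commutes with the $U(d)$-action, each of its real eigenspaces is $U(d)$-invariant and hence either $\{0\}$ or $\sT_0$; therefore $C|_{\sT_0} = c(d)\, I_{\sT_0}$ for some $c(d)\in\RRR$, giving $C = c(d) P_{\sT_0}$. Finally, $c(d)\geq 0$ because $C$ is positive semi-definite as a covariance, and $c(d)>0$ because $u$ is not a point mass---being proportional to volume on $\sD$, it assigns positive measure to every open subset of $\sD$, and $\sD$ has nonempty interior in $\sT_1$.

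The main obstacle is the irreducibility of the $U(d)$-conjugation action on $\sT_0$. Though a standard fact from Lie theory, for completeness I would sketch a direct verification: given a nonzero $\Phi_U$-invariant real subspace $W\subseteq\sT_0$, take any nonzero $A\in W$, diagonalize it using some $\Phi_U$, subtract a transposition conjugate to produce a nonzero multiple of some $E_{ii}-E_{jj}$ in $W$, then apply permutation unitaries and a few explicit $SU(2)$ rotations in $2\times 2$ blocks to generate the real span of the generalized Gell-Mann matrices, yielding $W=\sT_0$.
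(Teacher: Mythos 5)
Your proposal is correct and follows essentially the same route as the paper: unitary invariance pins down $\EEE\rho=\tfrac{1}{d}I$, and the key lemma in both cases is the irreducibility of the conjugation action of $U(d)$ on $\sT_0$, from which a Schur-type argument forces $C=c(d)\,P_{\sT_0}$. Your use of the self-adjointness of $C|_{\sT_0}$ (real eigenspaces are invariant) is a slightly cleaner way to conclude than the paper's direct appeal to Schur's lemma over $\RRR$, and your explicit argument that $c(d)>0$ is a small addition the paper states but does not prove.
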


\begin{proof}
As a consequence of Proposition~\ref{prop:unitary}, $\EEE\rho$ must be invariant under $U$, and since the only operators in $\Hilbert$ invariant under all unitaries are the multiples of the identity $I$, \eqref{Erho} follows.

Likewise, $C$ must be invariant under $U(d)$. To determine all $U(d)$-invariant   operators on $\sS$, we first show that the representation of $U(d)$ on $\sS$ is the direct sum of two irreducible representation spaces, $\RRR I$ (the multiples of the identity) and $\sT_0$. 

Clearly, $\RRR I$ and $\sT_0$ are $U(d)$-invariant (as $\tr (UAU^{-1})=\tr(A)$), they are orthogonal in the Hilbert-Schmidt inner product, their sum is $\sS$, and $\RRR I$ is irreducible because it is 1-dimensional. In order to show that $\sT_0$ is irreducible, we show that $\{0\}$ and $\sT_0$ are its only invariant subspaces. To this end, let $\sU\neq \{0\}$ be an invariant subspace of $\sT_0$; we show that $\sU+\RRR I=\sS$, which implies that $\sU=\sT_0$. Note that $\sU+\RRR I$ is invariant. Let $0\neq A\in \sU$. Then $A$ has at least two different eigenvalues; choose an orthonormal basis of $\Hilbert$ that diagonalizes $A$. We show that all $B\in\sS$ that are diagonal in the same basis also lie in $\sU+\RRR I$; it then follows by applying unitaries that $\sU+\RRR I=\sS$. For this, it suffices to show that for $d\geq 2$ the only subspace of $\RRR^d$ that is invariant under permutation of components and contains $\vc:=(1,1,\ldots,1)$ and some vector not proportional to $\vc$ is $\RRR^d$ itself. Indeed, if $\sW$ is such a subspace and $\vw\in \sW\setminus \RRR \vc$, then $w_i\neq w_j$ for some $i\neq j$. Let $\vw'$ be the vector obtained from $\vw$ by permuting $w_i$ and $w_j$, then $\vw'':=\vw-\vw' \in \sW$ has $w''_i= w_i-w_j$, $w''_j=w_j-w_i$, while all other components of $\vw''$ vanish. Thus, using permutations again, $(1,-1,0,\ldots,0)\in\sW$ and
\begin{multline}
(1,0,\ldots,0) =\\ \tfrac{1}{d}\Bigl[\vc + (1,-1,0,0,\ldots,0) + (1,0,-1,0,\ldots,0) + \ldots + (1,0,\ldots,0,-1)\Bigr] \in \sW\,.
\end{multline}
By permutation, all $(0,\ldots,0,1,0,\ldots,0)\in\sW$, so $\sW=\RRR^d$.

Now, since $\sT_0$ is irreducible, we can apply Schur's lemma \cite{Schur}. Since the irreducible representations $\RRR I$ (which has dimension 1) and $\sT_0$ (which has dimension $d^2-1\geq 3$) are inequivalent, Schur's lemma yields that every $U(d)$-invariant operator $C:\sS\to\sS$ is of the form
\be
C= \tilde c P_{\RRR I} + c P_{\sT_0}\,.
\ee
For the covariance operator $C$, since always $\rho-\EEE\rho \in \sT_0$, we have that $\tilde c=0$. 
\end{proof}

We can characterize the value of $c=c(d)$ as follows. Fix $\psi\in\SSS(\Hilbert)$ and set $v=v'=|\psi\rangle\langle\psi|$. Then
\begin{align}
\langle v,Cv\rangle 
&= \EEE\Bigl[\bigl(\tr [v(\rho-\EEE\rho)]\bigr)^2\Bigr]\\
&= \EEE\Bigl[\bigl(\scp{\psi}{\rho|\psi}-d^{-1}\bigr)^2\Bigr]\\
&= \EEE\Bigl[\scp{\psi}{\rho|\psi}^2\Bigr] -2d^{-1}\EEE\scp{\psi}{\rho|\psi}+d^{-2}\\
&= \EEE\Bigl[\scp{\psi}{\rho|\psi}^2\Bigr] -d^{-2}\,.
\end{align}
On the other hand,
\begin{align}
\langle v,Cv\rangle 
&= c(d) \langle v,P_{\sT_0}v\rangle \\
&= c(d) \Bigl(\langle v,v\rangle - \langle v,P_{\RRR I}v\rangle \Bigr)\\
&= c(d) \Bigl(1 - \langle v,d^{-1/2}I\rangle \langle d^{-1/2}I, v\rangle \Bigr)\\[1mm]
&= c(d) \bigl(1 - d^{-1}(\tr v)^2 \bigr)\\[2mm]
&= c(d) (1-d^{-1})\,.
\end{align}
Thus,
\be
c(d) = \tfrac{d}{d-1} \EEE\Bigl[\scp{\psi}{\rho|\psi}^2\Bigr] - \tfrac{1}{d(d-1)}\,.
\ee
We did not succeed in evaluating the expectation value.\footnote{Tucci \cite{Tu02} showed that $\EEE\Bigl[\scp{\psi}{\rho|\psi}^2\Bigr] = \frac{d+1}{d(d^2+1)}$, which leads to the formula of Footnote 1.}

\subsection{Distribution of Eigenvalues}

Let $T_1$ be the plane
\be\label{T1def}
T_1:= \Bigl\{(\lambda_1,\ldots,\lambda_d)\in\RRR^d: \sum_{i=1}^d \lambda_i=1  \Bigr\}\,.
\ee

\begin{prop}
Under $u$, the eigenvalues $\lambda_1\geq \lambda_2 \geq \ldots \geq \lambda_d$ of $\rho$ have joint distribution in $\Lambda\subset T_1$ with density
\be\label{eig}
f(\lambda_1,\ldots,\lambda_d) = \mathcal{N} \prod_{1\leq i<j \leq d} |\lambda_i-\lambda_j|^2
\ee
relative to the volume measure in $T_1$ with normalization constant $\mathcal{N}>0$.
\end{prop}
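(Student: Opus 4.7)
The plan is to change variables from the Cartesian coordinates on $\sT_1$ to spectral coordinates via $\rho = U\Lambda U^{*}$ with $\Lambda=\mathrm{diag}(\lambda_1,\ldots,\lambda_d)$, compute the Jacobian, and then integrate out the unitary. Density matrices with a repeated eigenvalue form a set of strictly smaller dimension inside $\sT_1$ and are thus $u$-null, so on the complement the map $(\lambda,U)\mapsto U\Lambda U^{*}$ descends to a diffeomorphism from $\Lambda^\circ\times \bigl(U(d)/U(1)^d\bigr)$ onto an open full-measure subset of $\sD$, where $\Lambda^\circ$ is the interior of the ordered simplex \eqref{Lambdadef} and $U(d)/U(1)^d$ is the flag manifold (the diagonal phases act trivially on $\rho$ and form the isotropy group at a generic $\Lambda$).

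\textbf{Jacobian.} By the unitary invariance of the Hilbert--Schmidt volume on $\sS$ (cf.\ the alternative proof of Proposition~\ref{prop:unitary}), I can compute the Jacobian at a point with $U=I$, where $\rho=\Lambda$. Writing $dU = U\,dA$ with $dA$ anti-Hermitian, differentiation of $\rho=U\Lambda U^{*}$ gives
\begin{equation*}
U^{*}\,d\rho\,U \;=\; d\Lambda + [dA,\Lambda],
\end{equation*}
whose diagonal entries are $d\lambda_i$ and whose $(i,j)$-entry for $i\ne j$ equals $(\lambda_j-\lambda_i)\,dA_{ij}$. The Hilbert--Schmidt volume on $\sS$ in the real coordinates $\rho_{ii}$ and $\Re\rho_{ij},\Im\rho_{ij}$ for $i<j$ therefore pulls back to
\begin{equation*}
\prod_{1\le i<j\le d}(\lambda_i-\lambda_j)^{2}\;\cdot\;d\lambda_1\cdots d\lambda_d\;\cdot\;d\nu(U),
\end{equation*}
each pair $i<j$ contributing a factor $(\lambda_i-\lambda_j)^2$ because the real and imaginary parts of $dA_{ij}$ are each rescaled by $(\lambda_j-\lambda_i)$; here $d\nu$ is the invariant measure on $U(d)/U(1)^d$.

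\textbf{Conclusion and main obstacle.} Restricting from $\sS$ to the trace-one hyperplane $\sT_1$ corresponds, on the spectral side, to restricting $d\lambda_1\cdots d\lambda_d$ to the volume measure on the hyperplane $T_1$ of \eqref{T1def}; the Vandermonde factor is unaffected. Integrating out $U$ over the compact manifold $U(d)/U(1)^d$ yields a finite positive constant, and normalizing produces the joint density of the ordered eigenvalues with respect to volume on $T_1$ as $\mathcal{N}\prod_{i<j}(\lambda_i-\lambda_j)^2$ on $\Lambda$ for some $\mathcal{N}>0$. The only nontrivial step is the Jacobian computation, whose key feature is the \emph{square} $(\lambda_i-\lambda_j)^2$ reflecting that off-diagonal entries of a complex Hermitian matrix carry two real degrees of freedom; this is the classical Hermitian random-matrix Jacobian, and the remaining bookkeeping (restriction to the trace-one hyperplane, marginalizing out $U$) is routine.
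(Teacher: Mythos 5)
Your proof is correct, and it takes a genuinely different route from the paper's. You obtain the density by a direct change of variables $\rho=U\Lambda U^{*}$: you compute the Jacobian of the spectral decomposition at $U=I$, use unitary invariance of the Hilbert--Schmidt volume to propagate it, and read off the factor $\prod_{i<j}(\lambda_i-\lambda_j)^2$ from the rescaling of the two real components of each off-diagonal entry of $d\rho$. The paper instead avoids computing any Jacobian: it starts from the Gaussian unitary ensemble, whose eigenvalue density $\propto \prod_k e^{-d\mu_k^2/2}\prod_{i<j}|\mu_i-\mu_j|^2$ it cites as known, conditions both the matrix distribution and the eigenvalue distribution on the trace-one hyperplane, and lets the variance $\sigma^2\to\infty$ so that the Gaussian factor flattens to a constant uniformly on the compact sets $\sD$ and $\Lambda$ while the Vandermonde factor survives. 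Your approach is more self-contained --- it is essentially the derivation of the GUE eigenvalue density that the paper takes as an input --- and concentrates all the work in one linear-algebra computation; the paper trades that differential-geometric computation for a limiting argument that must itself be justified (uniform convergence of the conditional densities together with boundedness of the Jacobian of $\varphi$ on the compact set $\Lambda\times U(d)$, in order to pull the limit through the pushforward). Both arguments rest on the same structural facts: degenerate spectra form a $u$-null set, and $\varphi$ restricts to a bijection from $\Lambda^\circ\times[U(d)/U(1)^d]$ onto a full-measure subset of $\sD$. The one step you dismiss as routine that merits a sentence is the descent from $\sS$ to the hyperplane $\sT_1$: it is clean because the trace corresponds exactly to $\sum_i\lambda_i$ under the spectral map and its Hilbert--Schmidt gradient $I$ has the same norm $\sqrt{d}$ as the gradient of $\sum_i\lambda_i$ in $\RRR^d$, so the coarea factors on the two sides cancel and the Vandermonde density passes unchanged to the slice. (A cosmetic point: you use $\Lambda$ both for $\mathrm{diag}(\lambda_1,\ldots,\lambda_d)$ and for the ordered simplex \eqref{Lambdadef}; in this paper's notation the latter is already reserved.)
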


\begin{proof}
The strategy of proof is to use, instead of volume on $\sS$, a Gaussian unitary ensemble, for which the distribution of the eigenvalues is known, and then let its variance tend to infinity, so that the distribution becomes flat on every compact set. 

The Gaussian unitary ensemble \cite{wiki} is the probability distribution over self-adjoint $d\times d$ matrices $X_{ij}= A_{ij}+iB_{ij}$ with real part $A_{ij}=A_{ji}$ and imaginary part $B_{ij}=-B_{ji}$ such that all $A_{ij}$ ($i\leq j$) and all $B_{ij}$ ($i<j$) are independent random variables, where $A_{ij}$ with $i<j$ and $B_{ij}$ are Gaussian with mean 0 and variance $1/(2d)$, while the $A_{ii}$ are Gaussian with mean 0 and variance $1/d$. Thus, the joint distribution of all $X_{ij}$ has density (with lower case symbols the possible values of random variables)
\begin{align}
f_X(x_{11},x_{12},\ldots,x_{dd})
&\propto \prod_{i<j} e^{-da_{ij}^2}e^{-db_{ij}^2} \prod_i e^{-da_{ii}^2/2}\\
&=  \prod_{i,j=1}^d e^{-d|x_{ij}|^2/2}\\
&=  e^{-d \tr x^2/2}\,.
\end{align}
It is known \cite{wiki} that the eigenvalues $\mu_1\geq \ldots\geq\mu_d$ of $X$ have joint distribution with density
\be
g_X(\mu_1,\ldots,\mu_d) \propto \prod_{k=1}^d e^{-\frac{d}{2} \mu_k^2} \prod_{1\leq i<j \leq d} |\mu_i-\mu_j|^2\,.
\ee
That is, $\varphi^{-1}$ maps the distribution $f_X(x) \, dx$ to the product of $g_X(\vmu)\,d\vmu$ (with $\vmu=(\mu_1,\ldots,\mu_d)$) and the uniform distribution on $U(d)$.

Now consider $Y:=\sigma X$ with arbitrary $\sigma>0$ that we will ultimately let tend to infinity. $Y$ has density
\be\label{fY}
f_Y(y_{11},y_{12},\ldots,y_{dd}) \propto e^{-d \tr y^2/2\sigma^2}\,,
\ee
and its eigenvalues $\nu_1=\sigma \mu_1,\ldots,\nu_d=\sigma \mu_d$ have joint density
\be\label{gY}
g_Y(\nu_1,\ldots,\nu_d) \propto \prod_{k=1}^d e^{-d \nu_k^2/2\sigma^2} \prod_{1\leq i<j \leq d} \frac{|\nu_i-\nu_j|^2}{\sigma^2}\,.
\ee
Again, $\varphi^{-1}$ maps the distribution $f_Y(y) \, dy$ to the product of $g_Y(\vnu)\, d\vnu$ (with $\vnu=(\nu_1,\ldots,\nu_d)$) and the uniform distribution on $U(d)$.

Since $\varphi$ maps $T_1\times U(d)$ to $\sT_1$, it maps the conditional distribution of $\vnu$ on $T_1$, times the uniform distribution on $U(d)$, to the conditional distribution of $Y$ on $\sT_1$. Likewise, it maps the conditional distribution of $\vnu$ on $\Lambda$, times the uniform distribution on $U(d)$, to the conditional distribution of $Y$ on $\sD$. Note that the conditional distribution of $Y$ on $\sT_1$ has density, up to a normalizing factor, given by $f_Y$ restricted to $T_1$, and the conditional distribution of $\vnu$ on $T_1$ has density $g_Y$ on $T_1$ up to a factor. In the limit $\sigma\to\infty$, the right-hand side of \eqref{fY} converges to 1, in fact uniformly on the compact set $\sD$; thus, also $f_Y$ (including the appropriate normalizing factor) converges uniformly to 1 on $\sD$. On the other hand, in the same way, the right-hand side of \eqref{gY}, after dropping the factors of $\sigma$ in the denominator, converges to $\prod |\nu_i-\nu_j|^2$, in fact uniformly on the compact set $\Lambda$. We want to draw the conclusion that $\varphi$ maps the limit of $g_Y$-conditional-on-$\Lambda$ (times the uniform distribution on $U(d)$) to the limit of $f_Y$-conditional-on-$\sD$ (i.e., to $u$).

To justify this conclusion, we note the following. The interior of $\Lambda$ is
\be
\Lambda^\circ = \Bigl\{(\lambda_1,\ldots,\lambda_d)\in T_1: \lambda_1>\ldots > \lambda_d>0 \Bigr\}\,.
\ee
Since $\Lambda$ is a convex set, its boundary has measure zero in $T_1$; thus, it does not matter whether we consider continuous measures on $\Lambda$ or $\Lambda^\circ$. For eigenvalues in $\Lambda^\circ$, the orthonormal basis of eigenvectors is unique up to phases; that is, $\varphi$ maps $\Lambda^\circ \times [U(d)/U(1)^d]$ bijectively to the set of non-degenerate positive definite density matrices, a dense set of full $u$-measure in $\sD$. Since $\varphi$ is smooth (in particular) on $T_1\times U(d)$, so is its Jacobian determinant; since $\Lambda \times U(d)$ is compact, the Jacobian is bounded on $\Lambda\times U(d)$. According to the transformation formula for integrals, the density of the pre-image is the Jacobian times the density of the image; as a consequence, if the Jacobian is bounded and the density of the image converges uniformly, then so does the density of the pre-image. That is, we can pull the limit through $\varphi$, as we claimed.

The upshot is that $\varphi^{-1}$ maps $u$ to
\begin{align}
&\lim_{\sigma\to\infty} g_Y(\vnu) \, d\vnu \times \mathrm{uniform}_{U(d)}\\
&= \mathcal{N} \biggl(\prod_{1\leq i<j \leq d} |\nu_i-\nu_j|^2\biggr) d\vnu \times \mathrm{uniform}_{U(d)}\,,
\end{align}
which proves \eqref{eig} (and by the way again the unitary invariance of $u$).
\end{proof}

\textit{Note added.} After completion of this paper we have learned of prior works \cite{H98,ZS01,Tu02,ZS03,ZS04} that considered the measure we denote by $u$. Hall \cite{H98} asked which distribution over the density matrices ``corresponds to minimal prior knowledge'' or is ``most random.'' That is perhaps the same as asking which distribution is uniform, or perhaps it is subtly different. He came up with three proposed answers, one of which is $u$, and regarded another one as ``most random.'' Hall also arrived at the formula \eqref{eig} for the distribution of the eigenvalues, but in a different way than we did. {\.Z}yczkowski and Sommers computed \cite{ZS03} the volume of $\sD$ in $\sT_1$ according to the Hilbert-Schmidt metric (and thus the normalization constant in the definition of $u$), and computed \cite[Eq.~(3.7)]{ZS01} the normalization constant in Proposition 4 to be  $\mathcal{N}= (d^2-1)!/\prod_{k=1}^d [k!(k-1)!]$. They also showed \cite{ZS01} that for a uniformly random unit vector in $\Hilbert\otimes \Hilbert$, the reduced density matrix in $\Hilbert$ is $u$-distributed, and that, for a random $d\times d$ matrix $A$ from the Ginibre ensemble (i.e., for which each entry is independent complex Gaussian with mean 0 and variance 1), $\rho:=AA^*/\tr(AA^*)$ has distribution $u$; asymptotics for large $d$ are studied in \cite{ZS04}. Tucci \cite{Tu02} also considered $u$, called it the ``uniform ensemble of density matrices,'' and computed all moments of the entries of a $u$-distributed $\rho$.

\bigskip\bigskip

\noindent{\it Acknowledgments.} We thank Stefan Keppeler for helpful discussion and Michael Hall, Christian Majenz, Ion Nechita, Michael Walter, and Karol {\.Z}yczkowski for pointing to relevant literature.


\begin{thebibliography}{22}

\bibitem{Alb} Albert, D.Z. (2000): 
	\textit{Time and Chance.} Harvard University Press.

\bibitem{AGTZ13} Allori, V., Goldstein, S., Tumulka, R., and Zangh\`\i, N. (2013): 
	Predictions and primitive ontology in quantum foundations: a study of examples. 
	{\it British Journal for the Philosophy of Science} {\bf 65}: 323--352.
	\url{http://arxiv.org/abs/1206.0019}

\bibitem{Ch18} Chen, E.K. (2018): 
	Quantum mechanics in a time-asymmetric universe: On the nature of the initial quantum state. 	{\it The British Journal for the Philosophy of Science}, forthcoming. 	\url{http://arxiv.org/abs/1712.01666}
	
\bibitem{Ch19} Chen, E.K. (2019): 
Quantum states of a time-asymmetric universe: Wave function, density matrix, and empirical equivalence.
	\url{http://arxiv.org/abs/1901.08053}
	
\bibitem{Ch20} Chen, E.K. (2020): 
	Time's arrow in a quantum universe: On the status of statistical mechanical probabilities. 
	To appear in V. Allori (editor): \emph{Statistical Mechanics and Scientific Explanation: Determinism, Indeterminism and Laws of Nature}, World Scientific.
	\url{http://arxiv.org/abs/1902.04564}

\bibitem{DGTZ05} D\"urr, D., Goldstein, S., Tumulka, R., and Zangh\`\i, N. (2005): 
	On the role of density matrices in Bohmian mechanics. 
	{\it Foundations of Physics} {\bf 35}: 449--467.
	\url{http://arxiv.org/abs/quant-ph/0311127}

\bibitem{GLTZ06} Goldstein, S., Lebowitz, J.L., Tumulka, R., and 
    Zangh\`\i, N. (2006):
	Canonical Typicality.
    \textit{Physical Review Letters} \textbf{96(5)}: 050403.
    \url{http://arxiv.org/abs/cond-mat/0511091}

\bibitem{GLTZ10} Goldstein, S., Lebowitz, J.L., Tumulka, R., and
	Zangh\`\i, N. (2010):
	Long-Time Behavior of Macroscopic Quantum Systems.
	\textit{European Physical Journal H} \textbf{35}: 173--200.
	\url{http://arxiv.org/abs/1003.2129}

\bibitem{GLTZ20} Goldstein, S., Lebowitz, J.L., Tumulka, R., and 
	Zangh\`\i, N. (2020):
	Gibbs and Boltzmann Entropy in Classical and Quantum Mechanics.
	To appear in V. Allori (editor): \emph{Statistical Mechanics and Scientific Explanation: Determinism, Indeterminism and Laws of Nature}, World Scientific.
	\url{http://arxiv.org/abs/1903.11870}
	
\bibitem{H98} Hall, M.J.W. (1998):
	Random quantum correlations and density operator distributions.
	\textit{Physics Letters A} \textbf{242(3)}: 123--129.
	\url{http://arxiv.org/abs/quant-ph/9802052}

\bibitem{wiki} Random matrix. 
	In {\it Wikipedia, the free encyclopedia} (accessed 3/8/2020)
	\url{http://en.wikipedia.org/wiki/Random_matrix}

\bibitem{Schur} Schur's lemma.
	In {\it Wikipedia, the free encyclopedia} (accessed 3/11/2020)
	\url{http://en.wikipedia.org/wiki/Schur%27s_lemma}

\bibitem{Tu02} Tucci, R.R. (2002):
	All Moments of the Uniform Ensemble of Quantum Density Matrices.
	\url{http://arxiv.org/abs/quant-ph/0206193}

\bibitem{ZS01} {\.Z}yczkowski, K., and Sommers, H.-J. (2001):
	Induced measures in the space of mixed quantum states.
	\textit{Journal of Physics A: Mathematical and General} \textbf{34(35)}: 	7111--7125 \url{http://arxiv.org/abs/quant-ph/0012101}

\bibitem{ZS03} {\.Z}yczkowski, K., and Sommers, H.-J. (2003):
	Hilbert--Schmidt volume of the set of mixed quantum states.
	\textit{Journal of Physics A: Mathematical and General} \textbf{36}:  
	10115--10130 \url{http://arxiv.org/abs/quant-ph/0302197}

\bibitem{ZS04} {\.Z}yczkowski, K., and Sommers, H.-J. (2004):
	Statistical properties of random density matrices.
	\textit{Journal of Physics A: Mathematical and General} \textbf{37}:  
	8457--8466 \url{http://arxiv.org/abs/quant-ph/0405031}

\end{thebibliography}
\end{document}